\begin{document}
\title[Pilot wave beables and decoherence]{About the relation between pilot wave beables and decoherence}

\author{I. Schmelzer}
\thanks{Berlin, Germany}
\email{\href{mailto:ilja.schmelzer@gmail.com}{ilja.schmelzer@gmail.com}}%
\urladdr{\href{http://ilja-schmelzer.de}{ilja-schmelzer.de}}
\sloppypar \sloppy

\begin{abstract}
Motivated by Wallace's thesis that pilot wave beables should be decoherence-preferred to recover quantum predictions, we consider the relation between pilot wave beables and decoherence. 

We prove that without any connection between beables and decoherence the overlap between macrcopic states becomes negligible. This is sufficient to recover quantum predictions, so that Wallace's thesis has to be rejected.

A natural connection between decoherence and beables appears if the decomposition into systems used by decoherence is based on the beables. While our first result becomes inapplicable in this case, we present evidence that the overlap becomes negligible too.
\end{abstract}

\newcommand{\pd}{\partial} 
\renewcommand{\d}{\mathrm{d}} 
\renewcommand{\c}{\mbox{$\mathrm{const}$}}
\renewcommand{\Im}{\mbox{$\mathfrak{Im}$}}
\newcommand{\vol}[1]{\mbox{$A_{#1}$}}

\newcommand{\B}{\mbox{$\mathbb{Z}_2$}} 
\newcommand{\Z}{\mbox{$\mathbb{Z}$}}
\newcommand{\R}{\mbox{$\mathbb{R}$}}
\newcommand{\C}{\mbox{$\mathbb{C}$}}
\renewcommand{\H}{\mbox{$\mathcal{H}$}} 
\renewcommand{\L}{\mbox{$\mathcal{L}^2$}} 
\newcommand{\T}{\mbox{$\mathcal{T}$}} 
\newcommand{\V}{\mbox{$\mathcal{V}$}} 

\newtheorem{theorem}{Theorem}
\newtheorem{criterion}{Criterion}

\providecommand{\abs}[1]{\lvert#1\rvert}
\providecommand{\loo}[1]{\lVert#1\rVert_\infty}
\providecommand{\ltwo}[1]{\lVert#1\rVert_2}

\maketitle

\section{Introduction}

Reading in Wallace's paper \cite{Wallace} that
\begin{quote}
``The plain truth is that there are currently no hidden-variable \ldots theo\-ries which are generally accepted to reproduce the empirical predictions of any interacting quantum field theory. This is a separate matter to the conceptual problems with such strategies \ldots We do not even have QFT versions of these theories to have conceptual problems with.'' \cite{Wallace},
\end{quote}
I was very surprised. In my opinon, the problem of \emph{existence} of a realistic theory which allows to recover the \emph{empirical} predictions of relativistic quantum field theory has been solved with Bell's proposal for ``beables for quantum field theory'' \cite{BellFT}, and that the ongoing search for better pilot wave theories (for example in \cite{DuerrParticleQFT,StruyveEM,Colin,Valentini}) in this domain is motivated by metaphysical rather than empirical problems. Wallace's justification for this claim was even more surprising for me:
\begin{quote}
``\ldots it is debatable whether field-based modificatory strategies will actually succeed in reproducing the predictions of QM. For recall: as I argued \ldots, it is crucial for these strategies that they are compatible with decoherence: that is, that the preferred observable is also decoherence-preferred. \ldots a hidden-variable theory whose hidden-variables are not decoherence-preferred will fail \ldots to recover effective quasiclassical dynamics. And in QFT (at least where fermions are concerned) the pointer-basis states are states of definite particle number, which in general are not diagonal in the field observables.''
\cite{Wallace}
\end{quote}
Instead, I have considered it to be the job of decoherence to explain why we observe particles in a situation where the fundamental pilot wave beables are fields or even something more fundamental. Following Wallace, decoherence is not only unable to do this job, but causes even problems with empirical viability. This was sufficient motivation to evaluate this thesis in more detail.

Once empirical viability is questioned, the natural place to look for is the equivalence proof between pilot wave theory and quantum theory. Considering this proof in section \ref{sec:proof}, we find it necessary to include the state of the observer into the picture. The measurement device is not enough --- once there exists doubts if the observer really observes the state of the measurement device,  one has to consider the full picture. As a consequence we need only the actual, unobserved observer state. The equivalence theorem in this form would remain valid even if the beables themself would be unobservable.

The only loophole which could prevent the recovery of quantum predictions is the question if different macroscopic states can be distinguished by the beables, or, in other words, if these states have negligible overlap as functions of the beables. For the particular case of field beables, it has been shown in \cite{overlap} that the overlaps become negligible for large particle numbers. But field theory is only a particular example. One should expect that more fundamental physical theories give other, more fundamental candidates for the beables --- strings, loops, or cells as proposed in \cite{clm} --- which give the observable fields only in the large distance limit. Their configuration variables (assuming some Lagrange formalism) are natural candidates for pilot wave beables. We obviously have to expect that their connection to the decoherence-preferred observables is different from that in field theory.

This leads to some questions: What is the possible relation between pilot wave beables and decoherence in such a more general situation? While the decoherence-preferred particle variables differ from the field variables, there exist, nonetheless, some nontrivial connection between them. What if there is no such connection? If there is a connection, how does it appear? If it appears in some more or less natural way, what are the consequences? Does this connection have an influence on the ability of pilot wave theories to recover quantum predictions? These are the questions we want to study in this paper.

The answers we find are quite positive for pilot wave theories with more fundamental beables: If there is absolutely no connection between beables and decoherence, we can even prove a theorem that the overlap becomes insignificant. On the other hand, we find that a connection has to be expected: Decoherence starts from a predefined decomposition into systems, and such a decomposition can be obtained in a reasonable way only from the beables. This leads to a non-trivial connection, where different systems have independent sets of beables. We find that this leads to some preference for product states.  While this makes it impossible to apply our general theorem, it appears plausible that in this case the overlap descreases approximately exponentially with an increasing number of system.

\section{The equivalence proof}\label{sec:proof}

Wallace's thesis claims that pilot wave beables have to be decoherence-preferred to recover quantum predictions. To reject it, it is sufficient to consider one equivalence proof which does not depend on this assumption. In our case, a quite standard version of the proof seems sufficient. On the other hand, in another equally standard, but inferior variant Wallace's thesis looks quite natural. The version of the equivalence proof which does not depend on it is enough to falsify Wallace's thesis, but the other variant gives us some interesting hint about the origin of this thesis. A nice summary of the main lesson of our proof has been given by Struyve:
\begin{quote}
``\ldots Saunders expressed a worry for using fields as beables. He expressed some doubts whether localized macroscopic bodies are represented by localized field beables. If not, he claimed, it would be unclear how a pilot-wave model with field beables reproduces the quantum predictions. However, although it is true that a pilot-wave model in which localized macroscopic bodies are represented by localized fields reproduces the quantum predictions, it is by no means a necessary requirement. As long as wavefunctionals corresponding to macroscopically distinct states are non-overlapping in the configuration space of fields, the field beables will display the outcomes of measurements.'' \cite{Struyve},
\end{quote}
If you do not doubt this and are not interested to understand the origin of Wallace's thesis you can skip the remaining part of this section.

Else, let's consider the general scheme of a quantum measurement in pilot wave theory. We restrict ourself in this paper to the simplest case of a measurement with two discrete eigenstates. This seems sufficient: As human beings, we can distinguish only finite numbers of different states, and, given that the number of particles in macroscopic states is much larger than the number of states we can distinguish, the number of the discrete states does not really matter. So assume we have a quantum system $S$ in the initial state $\psi(t_0,q_S)=a_0\psi_0^S(q_S) + a_1\psi_1^S(q_S)$, with $\langle\psi_0^S|\psi_1^S\rangle=0$. The measurement is an interaction with the remaining part of the universe which measures the projection operator $|\psi_1^S\rangle\langle\psi_1^S|$ --- an operator with two eigenvalues, $1$ for $\psi_1$ and $0$ for all other states. The interaction Hamiltonian may be something like $H_{SE}=|\psi_1^S\rangle\langle\psi_1^S|p$, where $p$ is the momentum operator of some unspecified pointer variable. For the system-internal Hamilton operator we simply set $H_S=0$. As the initial state of the wave function of the universe we assume a product state
\begin{equation}
\Psi(t_0,q_{univ}) = \Psi(t_0,q_S,q_{rest}) = \psi^S(t_0,q_S) \psi^{rest}(t_0,q_{rest})
\end{equation} 
where $q_{rest}$ describes the remaining part of the universe. After the interaction, we obtain some superpositional state
\begin{equation}
\Psi(t_1,q_S,q_{rest}) = a_0\psi_0^S(q_S)\psi^{rest}_{0}(t_1,q_{rest}) + a_1\psi_1^S(q_S)\psi^{rest}_{1}(t_1,q_{rest})
\end{equation} 
where $\psi^{rest}_{0}(t_1,q_{rest})$ and $\psi^{rest}_{1}(t_1,q_{rest})$ are already different states.

All the time, the conditional wave function
\begin{equation}
\psi^S(t,q_S) = \Psi(t,q_S,q_{rest}(t))
\end{equation}
obtained by putting the actual value of $q_{rest}(t)$ into the wave function of the universe is well-defined and correctly defines the guiding equation for $q_S(t)$. But in general it does not follow the effective Schr\"{o}dinger equation of the system (which would be trivial for $H_S=0$): During the measurement, it evolves in dependence of $q_{rest}(t)$ and $\Psi(t,q_{univ})$. This is the effective collapse process.

The collapse is finished if the two wave functions $\psi^{rest}_{i}(t_1,q_{rest})$ no longer overlap, and if this property is stable in time. After this, the conditional wave function $\psi^S(t,q_S)$ no longer collapses and becomes an effective wave function, that means, it follows the (in our case trivial) Schr\"{o}dinger equation of the system. If (for fixed $t$) the $\psi^{rest}_{i}(q_{rest})$ do not overlap, the conditions $\psi^{rest}_{i}(q_{rest})\neq 0$ define a decomposition into two sets $Q^i_{rest}$ with
\begin{equation}
q_{rest} \in Q^0_{rest} \cup Q^1_{rest}, \qquad  Q^0_{rest} \cap Q^1_{rest}\cong\emptyset.
\end{equation}
For $q_{rest}\in Q^i_{rest}$ the effective wave function of the system appears to be $\psi_i^S(q_S)$. If the wave function of the universe is in quantum equilibrium, this happens with probability $\abs{a_i}^2$.

All this corresponds nicely with the description given by quantum theory. The only difference is the following: For a quantum measurement being finished, the $q_{rest}$ should be in macroscopically different states. In pilot wave theory, we have to require that the $\psi^{rest}_{i}(q_{rest})$ do not overlap, and that this condition remains stable in time. All we have to prove is that pilot wave theory recovers quantum predictions, thus, we need only one direction. Moreover we can assume that macroscopically different states remain to be macroscopically different in time. Therefore, all we need to recover quantum predictions is the hypothesis that \emph{macroscopically different states $\psi^{rest}_{i}(q_{rest})$ do not overlap as functions of the pilot wave beable $q_{rest}$}.

Note that the configuration $q_{rest}$, as the configuration of the rest of the universe, contains also the configuration of the observer $q_{obs}$ himself. The condition that the observer has seen the result is of course different, more restrictive than the condition that the measurement has been finished. But it is quite close: All we need is that the configurations of the observer $q_{obs}$ are different for different measurement results. Thus, there should be a decomposition into two sets $Q^i_{obs}$ with
\begin{equation}
q_{obs}\in Q^i_{obs} \Rightarrow q_{rest}\in Q^i_{rest}, \qquad  Q^0_{obs} \cap Q^1_{obs}\cong\emptyset,
\end{equation}
in other words, $q_{obs}$ should be already sufficient to distinguish different measurement results.

\subsection{The motivation for Wallace's thesis}

It is also worth to note that the whole consideration is based on really existing objects, not on observed objects. According to pilot wave theory, the wave function of the universe $\Psi(q_{univ})$, as well as the configuration $q_{univ}$, really exist. Thus, the conditional wave function we have constructed based on these ingredients also really exists. As well, the quantum equilibrium distribution is a distribution for the actual, real values of $q_{univ}$. Last but not least, $q_{obs}$ describes the real configuration of the observer, who, therefore, is in really different states if $q_{obs}\in Q^i_{obs}$ for different $i$. None of these objects is observed in some way, nor is there any necessity for them of being observed.

This seems different if we, instead of including the observer $q_{obs}$ into the pilot wave picture, follow the Copenhagen quantum scheme, where the observer is located in the classical part, outside the quantum domain. Everything else seems similar, with $q_{rest}$ being replaced by the configuration of some measurement device $q_m$. But, once we have not included the observer into the picture, we cannot consider the observers real configuration $q_{obs}$. As a consequence, lot's of unnecessary questions appear: The observer has to observe, somehow, the result of the measurement. What does he observe if he looks at the device?  The wave function $\psi^m(q_m)$ of the measurement device or, instead, the actual value of $q_m$? Is it justified to put the actual value $q_m$ into the wave function, or should we use, instead, some observed value of $q_m$? Maybe the observation is fooled like the particle detectors in \cite{Aharonov}? In this incomplete picture, these questions seem unavoidable and difficult to answer. Examples of arguments of this type have been given by Wallace and Brown:
\begin{quote}
``Suppose we accept that it is the entered wavepacket that determines the outcome of the measurement. Is it trivial that the observer will confirm this result when he or she “looks at the apparatus”? No, though one reason fothe effective wave function of the system appears to be $\psi_i^S(q_S)$r the nontriviality of the issue has only become clear relatively recently. The striking discovery in 1992 of the possibility (in principle) of “fooling” a detector in de Broglie-Bohm theory should warn us that it cannot be a mere definitional matter within the theory that the perceived measurement result corresponds to the “outcome” selected by the hidden corpuscles.''\cite{BrownWallace}.
\end{quote}
In the objective picture, the observer is not obliged to ``confirm'' the result.  It is simply an objective fact that for $q_{obs}\in  Q^i_{obs}$ the effective wave function of the system appears to be $\psi_i^S(q_S)$. 
\begin{quote}
``\ldots that the \ldots possibility of fooling detectors casts doubt on the claim by Maudlin \cite{Maudlin} p. 483 that the so-called effective (post-measurement) wavefunction of the object system is \emph{defined} (in part) by the positions of the corpuscles associated with the apparatus'' \cite{BrownWallace}.
\end{quote}
It is exactly reverse: Because the conditional wave function is \emph{defined} by $q_{rest}$, and not by any results of observations, the possibility of fooling detectors does not matter at all.

It is this incomplete consideration of the measurement process which can be easily used to motivate Wallace's thesis: The incomplete consideration can be interpreted as a reduction of general measurements to measurements of the beables $q_m$. How does this fit with decoherence?  To reduce a decoherence-preferred measurement to a non-decoherence-preferred one seems awkward, in conflict with decoherence. Thus, it seems necessary to require that the measurements of beables have to be decoherence-preferred. But once we do not have to consider any measurement of $q_m$, this argument fails.

\subsection{The classial limit}

Nonetheless, let's consider in more detail if anything can go wrong with the classical limit. Last but not least, Wallace focusses in his consideration of the quasiclassical domain:
\begin{quote}
``\ldots a hidden-variable theory whose hidden-variables are not deco\-herence-preferred will fail \ldots to recover effective quasiclassical dynamics.''
\cite{Wallace}
\end{quote}
And the classial limit works in pilot wave theory in a very different way than in usual quantum mechanics, a way which is centered around the beables. Indeed, assume we have a wave function with $\abs{\Psi}=\mathrm{const}$. Looking at the quantum potential, we find it being zero. As a consequence, for the phase $S(q,t)$ and the trajectory $q(t)$ the classical Hamilton-Jacobi equation holds. Thus, $q(t)$ and $p(t)=\nabla S(q,t)$ is a solution of the classical Hamilton equations, and we are in a completely classical situation. Once all our configuration is defined by $q(t)$, we cannot distinguish our world from a purely classial world, and, in particular, cannot observe nor $S(q,t)$ nor the wave function.  This defines an essential difference between pilot wave theory and usual quantum theory: In usual quantum theory, a state with $\abs{\Psi}=\mathrm{const}$ is a widely distributed, non-localized quasiclassical state, very different from a wave-packet localized in $p$ and $q$ we would need for the classical limit in pure quantum theory.

Is there any conflict between this pilot-wave-specific picture of the classical limit and the one provided by decoherence? The limit is described in terms of the beable variables, not in terms of decoherence-preferred observables, so, at least in principle, one could expect some problems. But I see no base for any conflict, because in this limit we do no longer have to choose between incompatible measurements: In the classical limit, all observables $f(p,q)$ commute, that means, in particular, that a measurement of the decoherence-preferred observables, whatever they are, is no longer incompatible with other observations.

\subsection{The remaining problem: The overlap of macroscopic states}\label{sec:overlapDefinition}

As we have seen, in the more complete pilot wave picture, which includes the observer as well, we do not have to consider observations of beables. So we avoid all problems related with unobservable trajectories or fooled detectors. The picture is completely based on objective entities as $q_{rest}$. We obtain an objective collapse picture described by the conditional wave function $\psi^S(t,q_S)$, which with the correct Born probability \emph{actually becomes} one of the eigenstates $\psi^S_i(q_S)$. What remains to check is only one assumption: That \emph{macroscopically different states do not overlap.}

Of course, not overlapping at all is only an idealization. What we need is that the overlap is not significant. To evaluate the significance of overlaps, we need a precise definition. We have proposed and used such a definition in \cite{overlap}: The overlap has been defined there as a relation $\rho(\psi_0|\psi_1)$ between two wave functions $\psi_0(q)$ and $\psi_1(q)$:
\begin{equation}\label{overlapdef}
\rho(\psi_0|\psi_1) = \int \d q\ \chi_{|\psi_0(q)|<|\psi_1(q)|} |\psi_0(q)|^2.
\end{equation}
This definition has a simple interpretation as the probability that a particle guided by $\psi_0(q)$ appears to be in a region where $\psi_1(q)$ dominates. If all relevant overlaps are sufficiently small compared with $1$, one can replace the $\psi_i(q)$ by non-overlapping approximations so that the probability that this approximation becomes relevant can be expressed in terms of the overlaps, and, therefore, appears to be sufficiently small.

\section{What happens if there is no connection between beables and decoherence?}\label{sec:noConnection}

If the overlap between macroscopically different states becomes insignificant or not depends on the particular pilot wave theory --- on it's Hamilton operator as well as on it's choice of the beables. Decoherence plays an important role in the process of macroscopic amplification, therefore it can have a nontrivial influence on the overlap.

Despite this it appears possible to prove a quite general theorem about the overlap between two states. This theorem assumes that there is no connection at all between decoherence and the pilot wave beables: The result of the decoherence-governed amplification process are simply two different states, and whatever the connection between them from point of view of the decoherence process, from point of view of the pilot wave beables they have no connection at all. Thus, they have to be simply two independent random states in a sufficiently large-dimensional Hilbert space. In this case, we can compute the expectation value of the overlap and consider it's dependence on the dimension of the Hilbert space. It appears that in this limit the overlap becomes insignificant.

Such a general theorem is important even for those cases where we cannot apply it --- it allows to play the scientific game known as ``burden tennis'': What happens if there is no connection at all is a reasonable ``null hypothesis''. If one thinks that in a particular case the situation is different, one has the burden of the argumentation. To show only that the general theorem is not immediately applicable seems not sufficient, it has to be at least plausible that the resulting overlaps will become significant.

\begin{figure}
\centerline{\includegraphics[angle=0,width=0.9\textwidth]{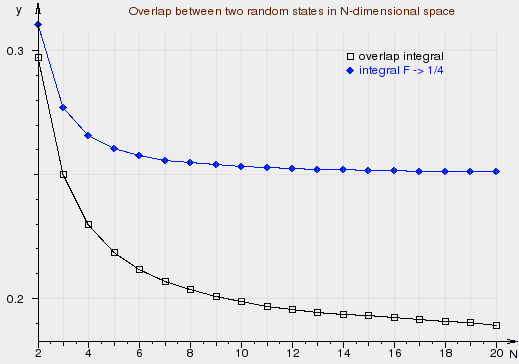}}
\caption{\label{fig:points}
Monte Carlo simulation results for overlap between two random states in $N$-dimensional space. The slowing down of the initial decrease rate is explained by the integral $F$ of \eqref{F}, which fastly approaches it's limit $\frac{1}{4}$.}
\end{figure}

For simplicity, we start with the consideration of a real N-dimensional Hilbert space. We want to compute the following expectation value $E$ for the overlap between two states $\psi_0$, $\psi_1$, which are independent and randomly distributed on the unit sphere $S^{N-1}$:

\begin{equation}\label{expectation}
E = \int\limits_{S^{N-1}} \frac{\d\Omega_0}{\vol{S^{N-1}}} \int\limits_{S^{N-1}} \frac{\d\Omega_1}{\vol{S^{N-1}}} \sum\limits_{i=1}^N \chi_{\abs{\psi^i_0}<\abs{\psi^i_1}} \abs{\psi^i_0}^2.
\end{equation}

For $N=2$, this integral can be taken explicitly and gives $\frac12 - \frac{2}{\pi^2}\approx 0.297357$. For small values of $N$ one can use Monte Carlo simulation to compute the integral. The result, as presented in figure \ref{fig:points}, does not look promising: After an initial decrease down to $\approx 0.2$, the rate of decrease dramatically slows down, suggesting the possibility of a nontrivial lower bound (see ). A cross-check with another way to compute the integral has given an even worse result: An approach to a fixed limit $\frac{1}{4}$. This was caused by a program error, but a remarkable and helpful one: It has shown that the integral $F$ of equation \eqref{F} below behaves very well for large $N$. This observation was a good starting point to find the proof below.

\begin{theorem}  \label{th:real}
For $N\to\infty$ the limit of the expectation value for the overlap $E$ in \eqref{expectation} is zero.
\end{theorem}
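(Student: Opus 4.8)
The plan is to collapse the $2(N-1)$-dimensional angular integral in \eqref{expectation} to a one-coordinate problem. Since the uniform measure on $S^{N-1}$ is invariant under permutations of the coordinates, every term of the sum over $i$ contributes equally, so that $E = N\,\langle \chi_{\abs{\psi_0^1}<\abs{\psi_1^1}}\,\abs{\psi_0^1}^2\rangle$, the average now being taken over the joint law of the two first coordinates, which are independent. The marginal density of a single coordinate of a uniform point on $S^{N-1}$ is $p_N(t)=c_N(1-t^2)^{(N-3)/2}$ on $[-1,1]$, with $c_N=\Gamma(N/2)/(\sqrt{\pi}\,\Gamma((N-1)/2))$. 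Writing $a=\abs{\psi_0^1}$, $b=\abs{\psi_1^1}$ and folding the sign, the densities become $q_N=2p_N$ on $[0,1]$ and
\[
E = N\int_0^1 a^2\,q_N(a)\Big(\int_a^1 q_N(b)\,\d b\Big)\,\d a .
\]

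First I would isolate the inner tail integral $\Phi_N(a)=\int_a^1 q_N(b)\,\d b$ and rescale to the width of $q_N$. Because $(1-t^2)^{(N-3)/2}\approx e^{-Nt^2/2}$, the density $q_N$ concentrates on the scale $a\sim N^{-1/2}$, so the natural substitution is $a=s/\sqrt{N}$, $b=r/\sqrt{N}$, under which $q_N$ tends to the half-Gaussian density and $\Phi_N$ to a complementary-error-function profile. The weight $a^2=s^2/N$ then cancels the prefactor $N$ exactly, and $E$ is exhibited as the auxiliary integral $F$ of \eqref{F}, whose rescaled integrand is $O(1)$ and which, as the numerics already suggest, converges rapidly in $N$. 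By a dominated-convergence or Laplace-method estimate on $F$ the problem of proving that $E$ converges is thereby reduced to evaluating a single fixed one-dimensional integral obtained in the $N\to\infty$ limit.

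The hard part is precisely this final evaluation, for it is here that the interplay between the unit-norm constraint $\sum_i\abs{\psi^i}^2=1$ and the selection region $\{a<b\}$ becomes decisive. The indicator $\chi_{a<b}$ retains exactly the coordinates on which $\psi_0$ is the smaller one, that is, the small coordinates of $\psi_0$, whose squared weight is suppressed; the assertion $E\to 0$ is the statement that, after rescaling, this suppressed weight is asymptotically negligible against the remaining mass. I would therefore direct the proof at a sharp two-sided bound on the rescaled integrand of $F$ that forces the selected contribution below any prescribed $\varepsilon$ for all large $N$, rather than at the leading Laplace coefficient alone. Controlling this balance, namely showing that the $O(1)$ leading terms organise so as to leave only a vanishing remainder, is the step on which the theorem stands or falls and the point demanding the greatest care.
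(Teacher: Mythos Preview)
Your reduction by permutation symmetry to $E=N\int_0^1 a^2 q_N(a)\Phi_N(a)\,\d a$ and the Gaussian rescaling $a=s/\sqrt N$ are both correct, and this route is entirely different from the paper's cube parametrisation and localised/delocalised split. The difficulty is that if you actually push the limit through, you do not get zero. Since $\sqrt N\,\psi_0^1\to Z_0$ and $\sqrt N\,\psi_1^1\to Z_1$ in distribution with $Z_0,Z_1$ independent standard Gaussians, and since $\langle N(\psi_0^1)^2\rangle=1=\langle Z_0^2\rangle$ supplies the uniform integrability needed to pass to the limit under the bounded indicator, one obtains
\[
E\;\longrightarrow\;\bigl\langle Z_0^2\,\chi_{\abs{Z_0}<\abs{Z_1}}\bigr\rangle\;=\;\tfrac12\,\bigl\langle\min(Z_0^2,Z_1^2)\bigr\rangle\;=\;\tfrac12-\tfrac1\pi\;\approx\;0.1817,
\]
a strictly positive constant. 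So the step you flag as ``the hard part''---showing that the $O(1)$ leading terms ``organise so as to leave only a vanishing remainder''---is not merely hard but impossible: your own outline, carried to its conclusion, disproves the stated theorem rather than proving it.

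The discrepancy with the paper's argument is traceable to an error in the change-of-variables identity \eqref{rule}: the right-hand side lacks the factor $2^N/\vol{S^{N-1}}$ required to make both sides probability measures (check $f\equiv 1$ already for $N=2$, where the right-hand side as written gives $\pi/2$). That missing factor grows like $(2N/e\pi)^{N/2}$ and overwhelms the $(1-\varepsilon)^N$ decay, so the bound $E_{rest}\le N^2(1-\varepsilon)^N$ does not follow. The paper's own Monte Carlo data, levelling off just below $0.2$, are in fact consistent with the limit $\tfrac12-\tfrac1\pi$ rather than with $E\to 0$.
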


\begin{proof} To estimate the integrals of type $\int_{S^{N-1}} \d\Omega f(\psi)$ we replace them by integrals over the unit cube $\loo{\psi}\le 1$ for the function $f(\psi/\ltwo{\psi})$. This requires the introduction of a weight factor. The weight is the inverse of the volume of the infinitesimal conus $\frac{1}{N}\loo{\frac{\psi}{\ltwo{\psi}}}^{-N}$, which becomes projected on a given surface element $d\Omega$. Thus, we obtain the following rule for replacement of the integrals:
\begin{equation}\label{rule}
\int_{S^{N-1}}\frac{\d\Omega_\psi}{\vol{S^{N-1}}} f(\psi)  = \int_{-1}^{1}\frac{\d\psi_0}{2}\cdots\int_{-1}^{1}\frac{\d\psi_N}{2} f(\frac{\psi}{\ltwo{\psi}}) N\frac{\loo{\psi}^N}{\ltwo{\psi}^N}.
\end{equation}
Given the form of the weigth factor, it seems useful to split the integral $E$ into two parts --- a ``localized part'' $E_{local}$ containing states with $\loo{\psi_k}\ge (1-\varepsilon)\ltwo{\psi_k}$ for above states $\psi_k$, and a remaining part $E_{rest}$ containing everything else. Let's at first evaluate the remaining part. Here, in one of the two integrals we can use the estimate $\loo{\psi_k}<(1-\varepsilon)\ltwo{\psi_k}$, while for the other part the estimate $\loo{\psi_k}<\ltwo{\psi_k}$ holds (for every $\abs{\psi^i}$, and, therefore, for the maximum as well, we have $\abs{\psi^i}\le\ltwo{\psi}$). This gives
\begin{equation}
E_{rest} \le N^2(1-\varepsilon)^N F
\end{equation}
with
\begin{equation}\label{F}
F =
\int\limits_{-1}^{1}\frac{d\psi^1_0}{2}\cdots\int\limits_{-1}^{1}\frac{\d\psi^N_0}{2}
\int\limits_{-1}^{1}\frac{d\psi^1_1}{2}\cdots\int\limits_{-1}^{1}\frac{\d\psi^N_1}{2}
\sum\limits_i \chi_{\abs{\psi^i_0}<\abs{\psi^i_1}} \frac{\abs{\psi^i_0}^2}{\ltwo{\psi_0}^2}.
\end{equation}
But because $\abs{\chi}\le 1$, $\sum_i \abs{\psi^i_0}^2 = \ltwo{\psi_0}^2$, and the remaining integrals are probability measures on the unit cube, we immediately obtain $F\le 1$. In fact, it can be proven as well that this integral has the limit $\frac{1}{4}$. It fastly approaches this limit, as can be seen in figure \ref{fig:points}. But $F\le 1$ is sufficient to give
\begin{equation}
E_{rest} \le N^2(1-\varepsilon)^N \to 0  \qquad\text{for}\qquad N\to\infty.
\end{equation}
It remains to estimate the localized part $E_{local}$. The reason for naming it that way is that the states in this part are, for small $\varepsilon$, sharply localized around their maximal value $\loo{\psi}=\abs{\psi^i}$. Indeed, it follows from $\abs{\psi^i} \ge (1-\varepsilon)\ltwo{\psi}$ that
\begin{equation}\label{localOverlap}
\sum_{j\neq i} \abs{\psi^j}^2 \le 2\varepsilon\ltwo{\psi}^2.
\end{equation} 
Now, the whole integral $E_{local}$ splits into $N^2$ parts $E_{ij}$ defined by the maximal coordinates of the two wave functions $\loo{\psi_0}=\abs{\psi^i_0}$, $\loo{\psi_1}=\abs{\psi^j_1}$. The function $\chi_{\abs{\psi^i_0}<\abs{\psi^i_1}} \abs{\psi^i_0}^2$ we have to integrate in \eqref{expectation} can be estimated for the diagonal parts $E_{ii}$ by $1$, and, because of \eqref{localOverlap}, for the non-diagonal parts $E_{ij}, i\neq j$, by $2\varepsilon$. If we, instead, replace this function with $1$, we obtain an integral over a probability probability measure on some subset, which gives something $\le 1$. In this integral, all parts $E_{ij}$ appear in a symmetric way, thus, the integrals over each of them should not depend on $i$ and $j$. Thus, each part $E_{ij}$ can give at most $N^{-2}$. This gives the following estimate for the local part of the integral:
\begin{equation}
 E_{local} \le \frac{1}{N} + 2 \varepsilon.
\end{equation} 
Thus, for every given $\varepsilon'>0$ we can choose $\varepsilon=\frac{1}{3}\varepsilon'$ so that for sufficiently large $N$
\begin{equation}
E \le \frac{1}{N} + 2 \varepsilon + N^2(1-\varepsilon)^N < \varepsilon'.	\qedhere
\end{equation}
\end{proof} 

It seems worth to note that in this proof we have not even required that the states $\psi_0$, $\psi_1$ should be orthogonal.

What changes if we consider, instead, the usual complex Hilbert space? Not much. We have to double the dimension of space. The invariant measure on $\mathbb{CP}^N$ can be obtained as the image of the invariant measure on $S^{2N-1}$ from the standard Hopf projection $S^{2N-1}\to \mathbb{CP}^N$, so that we can simply use the same measure $\d\Omega$ on $S^{2N-1}$ as the probability measure. The function we have to integrate changes only in a minor way: instead of one-dimensional condition $\abs{\psi^i_0}<\abs{\psi^i_1}$, we have to use the complex analogon $\bar{\psi}^i_0\psi^i_0 < \bar{\psi}^i_1\psi^i_1$, and to multiply it with $\bar{\psi}^i_0\psi^i_0$:
\begin{equation}\label{expComplex}
E_c = \int\limits_{S^{2N-1}} \frac{\d\Omega_0}{\vol{S^{2N-1}}} \int\limits_{S^{2N-1}} \frac{\d\Omega_1}{\vol{S^{2N-1}}} \sum\limits_{i=1}^N \chi_{\bar{\psi}^i_0\psi^i_0 < \bar{\psi}^i_1\psi^i_1} \bar{\psi}^i_0\psi^i_0.
\end{equation}
For this integral, the following theorem holds:
\begin{theorem} \label{th:complex}
For $N\to\infty$ the limit of the expectation value for the overlap $E_c$ in \eqref{expComplex} is zero.
\end{theorem}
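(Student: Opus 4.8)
The plan is to mimic the proof of Theorem~\ref{th:real} essentially verbatim, after the reduction indicated in the paragraph preceding the statement: replace the complex $N$-dimensional problem by a real $2N$-dimensional one. Concretely, write each complex coordinate $\psi^i = x^i + \sqrt{-1}\,y^i$, so that $\bar\psi^i\psi^i = (x^i)^2 + (y^i)^2$ and $\ltwo{\psi}^2 = \sum_i ((x^i)^2+(y^i)^2)$ is exactly the Euclidean norm on $\R^{2N}$. Since the invariant measure on $\mathbb{CP}^N$ pushes forward from the round measure on $S^{2N-1}$ via the Hopf map, and the integrand in \eqref{expComplex} is invariant under the $U(1)$ phase action on each pair simultaneously (indeed under the diagonal Hopf circle), we may carry out both integrals over $S^{2N-1}$ with its normalized round measure. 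This puts $E_c$ into a form to which the cube-replacement rule \eqref{rule} applies with $N$ replaced by $2N$.

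Next I would re-run the split into $E_{c,local}$ and $E_{c,rest}$. For $E_{c,rest}$, the only change is bookkeeping: using the weight $2N\,\loo{\psi}^{2N}/\ltwo{\psi}^{2N}$ and the estimate $\loo{\psi_k} < (1-\varepsilon)\ltwo{\psi_k}$ on one factor and $\loo{\psi_k}\le\ltwo{\psi_k}$ on the other, one gets $E_{c,rest}\le (2N)^2(1-\varepsilon)^{2N} F_c$ where $F_c$ is the analogue of $F$ with $2N$ real variables in each block and the complex-pair condition; exactly as before $\sum_i \bar\psi^i_0\psi^i_0/\ltwo{\psi_0}^2 = 1$ and $\abs{\chi}\le 1$ give $F_c\le 1$, hence $E_{c,rest}\le (2N)^2(1-\varepsilon)^{2N}\to 0$. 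Here one should note that the number of summands in \eqref{expComplex} is $N$, not $2N$: the maximal ``coordinate'' $\loo{\psi}$ is over $2N$ real entries, but a single index $i$ controls a pair $(x^i,y^i)$; this only helps the bounds, so I would simply keep the looser count where convenient.

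For the localized part, the argument transfers directly. On the set where $\loo{\psi}=\abs{x^i}$ (say) satisfies $\abs{x^i}\ge(1-\varepsilon)\ltwo{\psi}$, one still gets $\ltwo{\psi}^2 - (x^i)^2 \le 2\varepsilon\ltwo{\psi}^2$, which in particular bounds $(y^i)^2$ and all other pairs, so that $\bar\psi^j_0\psi^j_0 = (x^j)^2+(y^j)^2 \le 2\varepsilon\ltwo{\psi_0}^2$ for $j\neq i$. Splitting $E_{c,local}$ according to which real coordinate is maximal for $\psi_0$ and which for $\psi_1$ gives at most $(2N)^2$ pieces, each bounded by $(2N)^{-2}$ after replacing the integrand by $1$ (a probability measure on a subset), and the off-diagonal pieces carry the extra factor $2\varepsilon$ from the above; the diagonal contribution is $O(1/N)$. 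One subtlety: one must check that ``$\psi_0$ peaked at real-part index $i$ and $\psi_1$ peaked at index $j$'' is still diagonal-vs-off-diagonal in the relevant sense, i.e.\ that the only non-negligible contribution is when the peaked complex indices coincide; since $\chi_{\bar\psi^i_0\psi^i_0<\bar\psi^i_1\psi^i_1}\bar\psi^i_0\psi^i_0$ is supported on the $i$-th complex index of $\psi_0$, when $\psi_1$'s peak sits at a different complex index the factor $\bar\psi^i_0\psi^i_0$ is already $\le 2\varepsilon\ltwo{\psi_0}^2$ — so the off-diagonal estimate goes through. Combining, $E_c \le \frac{C}{N} + C'\varepsilon + (2N)^2(1-\varepsilon)^{2N}$ for absolute constants, and choosing $\varepsilon$ small and $N$ large finishes the proof.

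The main obstacle, such as it is, is purely the symmetry reduction at the start: one must be careful that the Hopf pushforward genuinely lets both $\psi_0$ and $\psi_1$ integrals run over $S^{2N-1}$ independently, which needs the integrand to be phase-invariant in each of $\psi_0$ and $\psi_1$ separately — and it is, since each summand depends on $\psi_0^i$ only through $\bar\psi_0^i\psi_0^i$ and on $\psi_1^i$ only through $\bar\psi_1^i\psi_1^i$. After that, every estimate is the real-case estimate with $N\rightsquigarrow 2N$ and a handful of harmless constant factors, so no new analytic difficulty arises.
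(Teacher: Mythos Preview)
Your proposal is correct and follows the paper's approach exactly: reduce to the real $2N$-dimensional sphere via the Hopf map, rerun the proof of Theorem~\ref{th:real}, and observe that the ``diagonal'' localized pairs are now those where both peaks fall in the same \emph{complex} coordinate, giving $4N$ out of $(2N)^2$ pairs and hence still an $O(1/N)$ contribution, so that $E_c\le \tfrac{1}{N}+2\varepsilon+(2N)^2(1-\varepsilon)^{2N}$. One small wrinkle in your write-up of the off-diagonal bound: it is not that $\bar\psi^i_0\psi^i_0\le 2\varepsilon$ \emph{because} $\psi_1$ peaks elsewhere --- rather, the sum splits as $\sum_{i\neq i_0}\chi\,\bar\psi^i_0\psi^i_0\le 2\varepsilon$ (from \eqref{localOverlap} applied to $\psi_0$ at its peak $i_0$) plus the $i=i_0$ term, where $\chi=0$ since $\bar\psi^{i_0}_1\psi^{i_0}_1\le 2\varepsilon<\bar\psi^{i_0}_0\psi^{i_0}_0$; the conclusion is the same.
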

\begin{proof} The only place which requires modification in comparison with the previous proof is that we obtain an overlap of order $1$ for $4N$ of the $4N^2$ localized pairs $E_{ij}$. Everything else remains unchanged. Thus, we obtain
\begin{equation}
E_c \le \frac{1}{N} + 2 \varepsilon + (2N)^2(1-\varepsilon)^{2N} < \varepsilon'
\end{equation}
with the same conclusion.
\end{proof}

\section{The counterexample: One-particle theory in field ontology} \label{sec:counterexample}

Thus, if there is no connection at all, everything is fine. But this is an assumption we cannot rely on. At the other extreme, we have an explicit counterexample where the overlap never becomes negligible, so that the standard equivalence proof fails. This example is one-particle theory, artificially described with a field ontology. In this case, in agreement with the results of Struyve \cite{Struyve}, we have found in \cite{overlap} that all states have an overlap of at least $0.18169(\pm 1)$ --- the universal overlap for orthogonal one-particle states in field theory.

What are the characteristic properties of this example which lead to this failure of the equivalence proof? We can find several very special properties, which, in combination, allow to explain this failure:

\begin{itemize}

\item First, there is an additional non-trivial conservation law --- the conservation of particle number. Without this, the counterexample would not work. It would be possible to obtain states with higher number of particles, which already have negligible overlap. In particular, an approximate conservation of particle number is not sufficient: It only increases the decoherence time.

\item Second, we have an extremal choice of the initial value --- only one particle. With a much larger number of particles, say, $10^80$, the problem would disappear as well.

\item Last but not least, we have a special relation between the field ontology and the decoherence-preferred observables: The decoherence-preferred states with fixed particle number are localized in a certain environment of the vacuum state in the field variables. As well, without this special relation, the problem would disappear, as a consequence of theorem \ref{th:complex}.

\end{itemize}

Given the thesis of Wallace that the pilot wave variables have to be decoherence-preferred, there is a certain irony in the observation that some non-trivial connection between pilot wave variables and decoherence-preferred variables is necessary for the things going wrong.

\section{Is there a connection between decoherence-preferred observables and pilot wave beables?}\label{sec:connection}

Unfortunately, in the physically interesting cases, in particular in field theory, we cannot apply our theorems \ref{th:real}, \ref{th:complex}, because we have some nontrivial connection between decoherence-preferred observables and beables. Even if, in the case of field theory, there is no identity, they are far away from being completely unrelated, so that one cannot assume that the macroscopic states generated by decoherence are independent random states.

The question we want to consider now is if there is some natural connection between them. 

In this context it becomes important to observe that the Hamilton operator taken alone does not allow to define ``the'' decoherence-preferred basis \cite{kdv, kdv2}. One needs more, namely a ``decomposition into systems'', to define a decoherence-preferred basis. Where does this ``decomposition into systems'' come from? Decoherence theory remains silent, it takes the decomposition as given. The particular subdivision into the various systems we see around us --- stars, planets, stones, cats, and human beings --- can be defined only in some environment of the actual state of the universe. For other possible states of the universe --- for example, states where the Earth does not even exist --- the system which defines, for example, a given cat simply does not make sense. Thus, in interpretations like many worlds, which do not have some well-defined actual configuration of the universe as pilot wave interpretations, we have also no natural subdivision around this state into systems.

Leaving this problem to these interpretations (see \cite{kdv, kdv2} for more) let's continue with the situation in pilot wave theory. Here we have some basis for a derivation of a decomposition into subsystems we need for decoherence. Indeed, pilot wave theory defines, at every moment of time $t$, some well-defined configuration of the universe $q_0 = q(t) \in Q$. This configuration can be used to develop some linear theory in the tangent space $\T \cong TQ|_{q(t)}$ of the configuration space $Q$ at $q(t)$. The various systems we see around us may be identified with linear subspaces of $\T$. This is, of course, not the place to consider possible details of such a construction. The only point we want to make here is that the tangent space $\T$ at $q(t)$ is a very natural ecological niche for the definition of all these subsystems.

If we use this place to define the subsystems, the decomposition $\T \cong \prod \T_{S^i}$ of the tangent space gives a corresponding decomposition $\L(\T,\C)\cong \bigotimes \L(\T_{S^i},\C)$ of the Hilbert space $\L(\T,\C)$, which approximates $\H\cong \L(Q,\C)$ in the environment of $q_0$. The notion of ``approximation'' used here is also a specific property of pilot wave theories: The guiding equation for some finite time interval depends only on the wave function in the relevant environment of $Q$, so that the restriction of the global wave function to some environment can be considered as a meaningful approximation --- an argumentation which is not valid in other interpretations. Now, the decomposition $\L(\T,\C)\cong \bigotimes \L(\T_{S^i},\C)$ already defines a natural decomposition of $\L(\T,\C)$ into systems which can be used to start decoherence considerations.

If we use a construction of this type to obtain the decomposition into systems, we obtain, automatically, a natural connection between the pilot wave beables, which describe the configuration space $Q$, and the decoherence-preferred observables derived starting with the decomposition $\T \cong \prod \T_{S^i}$: For each of the subsystems $\T_{S^i}$ we have, by construction, separate pilot wave beables $q_{S^i}=(q^k_{S^i})$, that means, the decomposition into systems coinsides with a decomposition of the pilot wave beables. 

From point of view of pure theory, this relation is problematic: Our theorem  \ref{th:complex}, which would be sufficient for our purpose, is not applicable once something very special happens --- we cannot rely on the assumption that the states created by decoherence are independent random states.

\section{Overlaps between product states}\label{sec:productStates}

Is there a special property of the states which could prevent us from applying theorem \ref{th:complex}? Such a property should be stable in time --- else, it could be rejected as unimportant. But decompositions into subsystems are useful exactly because the subsystems develop almost independently in time. In particular states which are product states initially remain product states with high probability.

Moreover, local interactions of some systems with their local environments tend to destroy superpositions between such product states. In pilot wave theories, this works in the following way: If a system $|\psi\rangle$ in a superpositional state $|\psi_1\rangle|\phi_1\rangle+|\psi_2\rangle|\phi_2\rangle$ with some far away system $|\phi\rangle$ interacts with its local environment $|\theta\rangle$, this gives some superposition
\begin{equation}\label{eq:superposition}
|\psi_1\rangle|\phi_1\rangle|\theta_1\rangle+|\psi_2\rangle|\phi_2\rangle|\theta_2\rangle.
\end{equation}
Now, we can use the actual value of the pilot wave variable $q^\theta$ of the local environment $|\theta\rangle$ to obtain an effective wave function for the two systems. If $q^\theta$ is not in the overlap of the two functions $\theta_i(q^\theta)$, this gives or $|\psi_1\rangle|\phi_1\rangle$, or $|\psi_2\rangle|\phi_2\rangle$, destroying the superposition between the pure product states. 

Giving these two effects --- stability of product states because of independent evolution of the systems, and reduction to product states because of independent interactions of particular systems with their particular environments --- we have to expect that the states obtained by decoherence have a tendency to be product states. Given the connection between decoherence and pilot wave theory, products of states of subsystems appear to be product states in terms of the corresponding pilot wave beables $q_{S^i}$:
\begin{equation}
\psi_a(q)=\prod \psi_{ai}(q_{S^i}), \qquad a\in\{1,2\}.
\end{equation}
Thus, the theorem \ref{th:complex} is not sufficient in a situation where the systems used by decoherence are subsystems of beables, because the probability distribution of the states created by decoherence is not random, but has a preference for product states. We cannot present a comparable proof for this general situation. Nonetheless, it appears sufficiently plausible that nothing dangerous happens.

The situation is similar to the case of $n$ orthogonal particles we have considered in \cite{overlap}: The n-particle states considered there have been also product states of one-particle states. As in this case, we have no simple product rule --- the product of the overlaps gives only a lower bound of the overlap of the product states. Nonetheless, there are good reasons to expect that the decrease is sufficiently fast: Indeed, it seems not unreasonable to expect that the overlap of the product states decreases in a similar way as the overlap of the n-particle states. In this case, the decrease would be approximately exponential in the number of different degrees systems $q_{S^i}$, which is much more than we need to distinguish macroscopically different states.

The plausibility arguments we have used in \cite{overlap} for the n-particle product state can be applied in the case of general product states: The distance between the maximal values of $\abs{\psi_a(q)}$ of the two product functions $\psi_a(q)=\prod \psi_{ai}(q_{S^i})$ can be easily computed. Indeed, whatever the functions $\psi_{ai}(q_{S^i})$ which participate in products $\abs{\psi_i(q)}$ of positive-valued functions $\abs{\psi_{ik}(q_{S^i})}$, the maximum will be reached in the points where every function reaches it's maximum. Therefore, the distance $\Delta$ between the maxima of the products can be obtained from the corresponding distances for the factors $\Delta_i$ by the simple formula $\Delta^2 = \sum \Delta_i^2$. Thus, it is increasing with the number of factors.

Some plausible expectations about the behaviour of the overlap can be obtained if we consider the case of equal factor functions $\psi_{ai}(q_{S^i})=\widetilde{\psi}_{a}(q_{S^i})$ for all $i$. In this case, the distance between the maxima increases as $\sqrt{n}$. Let's also look at the line connecting the maxima. Assume, the two functions $\widetilde{\psi}_{a}(q_{S^i})$  have only one maximum $\psi^a_{max}$ at $q^a_{max}$ and decrease with distance from the maximum. Then, in the one-dimensional case, there can be only one local maximum $q_{max}\in [q^1_{max},q^2_{max}]$ of the overlap function --- the point $q_{max}$ defined by $\widetilde{\psi}_{1}(q_{max})=\widetilde{\psi}_{2}(q_{max})=\psi_{max}$. Then, on the line connecting the two maxima in the n-dimensional case, the overlap function reaches it's maximum at the point $(q_{max},\ldots,q_{max})$, and this maximal value is $\psi_{max}^n$. Now, given that the integrals $\ltwo{\widetilde{\psi}_{a}}$ of functions with maxima $(\psi^a_{max})^n$ give $1$, we can expect the order of the corresponding overlap integral as being roughly proportional to the relation of the maxima. This relation is $(\psi_{max}/\psi^a_{max})^n$, thus, it decreases exponentially with $n$. This gives a plausibility argument that the overlap integral will decrease approximately exponentially with the number of systems.

An extremal case where we can compute the overlap exactly is the case where the wave functions can have only two values: $0$ and $1$. In these cases, the overlap for each system consists of the normalized area of the configuration space where above functions are $1$. For product states of this type, the overlap is the product of the normalized areas of the subsystems, thus, the product of the probabilities of overlap for the subsystems. If all subsystems are of the same type, thus, give the same overlap proability $p$, we obtain $p^N$ as the overlap probability for $N$ systems. This gives, again, the expectation that the overlap decreases exponentially with the number of systems.

In \cite{overlap} we have made  some numerical computations of overlap integrals for product states for some other functions. The results also support the thesis that the overlap decreases approximately exponentially with the number of systems.

All these are, of course, only plausibility arguments. As well, we have considered here only the two extremal situations: independent random states, and pure product states.  The general situation will be some mixture.

Nonetheless, with these plausibility arguments the advantage in burden tennis is already quite large: We have not only the general theorem as the default assumption. We have also found lot's of arguments for the most plausible derivation from the default assumption --- product states instead of random states. Additionally there is a large gap between the fast, exponential decrease which seems plausible and the minimal decrease which would be sufficient to preserve the phenomena, given the large numbers of particles in macroscopic states.

Given that the apparatus of quantum theory depends on such uncertain notions as macroscopic states, we should not expect much more. Thus, the current situation seems close enough to the ideal one. Designing new pilot wave theories, we are free to introduce whatever beables we prefer. While there is some theoretical possibility that the equivalence theorem fails for the resulting theory, it is sufficiently plausible that in general it does not.

\section{Conclusions}

Motivated by Wallace's thesis that pilot wave beables have to be decoherence-preferred to allow a recovery of quantum theory predictions, we have studied the connection between pilot wave beables and decoherence.

Wallace's thesis has to be rejected: On the contrary, if we have no connection at all between decoherence and pilot wave beables, we can prove that the overlaps between macroscopic states become insignificant, which is sufficient to recover quantum predictions.

On the other hand, we have found that we have to expect a nontrivial connection between decoherence and pilot wave beables. The point is that decoherence depends on the choice of a decomposition into systems, and we have found that the natural environmental niche for such a decomposition is closely connected with the beables. In this case, we have to expect that the states created by decoherence are not random, but have some preference for product states. Fortunately, product states do not seem problematic as well. We have presented plausibility arguments that the overlap between product states decreases exponentially with the number of systems. That means, in above extremal cases --- random states as well as pure product states --- the overlap decreases sufficiently fast. Thus, there is no reason to expect something wrong happens in the general case.

In special circumstances, the overlap may always remain significant: In the counterexample of one-particle theory described with field beables, we have an additional conservation law (particle number) and a very special initial value (one) for the conserved value as particular circumstances which force the overlap to remain significant. In the general case, it seems quite implausible that such things happen. Thus, given the results of this paper, one can leave to burden of argumentation to those who doubt that a particular pilot wave theory is viable.

The rejection of Wallace's thesis allows to use decoherence as a useful, important tool in pilot wave theories. The tangential space \T\/ at the actual configuration $q(t)$ of the universe gives an ecological niche for decompositions of the degrees of freedom of the universe into systems. Starting with such a decomposition into systems, we can apply decoherence to find out which are the most stable, most accessible observables. These observables are not necessarily the beables themself --- instead, as in the example of field theories, some sort of particles or quasiparticles may be decoherence-preferred, while the fundamental beables are not. In such a context, decoherence appears to be a useful tool to find out what will be observable.

Thus, the connection between decoherence and pilot wave theory is symbiotic: The actual configuration $q(t)$ and its environment defines a place for the decomposition into systems, which is a necessary prerequisite for decoherence. In return, decoherence explains what we observe, even if we, instead of the fundamental beables, observe something different like some sorts of quasiparticles.


\begin{thebibliography}{99}

\bibitem{Aharonov} Y. Aharonov, L. Vaidman, About position measurements which do not show the Bohmian particle position, In Cushing, J. T., Fine, A., Goldstein, S. (eds.): Bohmian Mechanics and Quantum Theory: An Appraisal, Dordrecht. Kluwer Academic Publishers, pp. 141–154. (1996), \href{http://arxiv.org/abs/arXiv:quant-ph/9511005}{arXiv:quant-ph/9511005}



\bibitem{BrownWallace} Brown, H.R., Wallace, D.: Solving the measurement problem: de Broglie-Bohm loses out to Everett,  Foundations of Physics, Vol. 35, No. 4, 517 (2005) \href{http://arxiv.org/abs/arXiv:quant-ph/0403094}{arXiv:quant-ph/0403094}

\bibitem{Colin} Colin, S.: Beables for quantum electrodynamics \href{http://arxiv.org/abs/quant-ph/0310056}{quant-ph/0310056} (2003)

\bibitem{DuerrParticleQFT} D\"{u}rr, D., Goldstein, S., Tumulka, R., Zanghi, N., Bohmian mechanics and quantum field theory, \href{http://arxiv.org/abs/quant-ph/0303156}{arXiv:quant-ph/0303156} (2003)

\bibitem{Maudlin} Maudlin, T. (1995), Why Bohm’s Theory Solves the Measurement Problem, Philosophy of Science 62, 479–483.

\bibitem{Struyve} Struyve, W.: Field beables for quantum field theory \href{http://arxiv.org/abs/arXiv:0707.3685}{arXiv:0707.3685} (2007)

\bibitem{StruyveEM} Struyve, W.,  Westman, H.: A new pilot-wave model for quantum field theory
\href{http://arxiv.org/abs/arXiv:quant-ph/0602229}{arXiv:quant-ph/0602229} (2006)


\bibitem{Wallace} Wallace, D.: The quantum measurement problem: state of play,  \href{http://arxiv.org/abs/arXiv:0712.0149}{arXiv:0712.0149} (2007)

\bibitem{BellFT} Bell, J.S.: Beables for quantum field theory, Phys. Rep. 137, 49-54 (1986)

\bibitem{clm} Schmelzer, I.: A Condensed Matter Interpretation of SM Fermions and Gauge Fields,
Foundations of Physics, vol. 39, 1, p. 73 (2009), also  \href{http://ilja-schmelzer.de/papers/clm.pdf}{ilja-schmelzer.de/papers/clm.pdf}

\bibitem{kdv} Schmelzer, I.: Why the hamilton operator alone is not enough, Foundations of Physics, vol. 39,  p. 486 (2009), \href{http://arxiv.org/abs/arXiv:0901.3262}{arXiv:0901.3262} (2009)

\bibitem{kdv2} Schmelzer, I.: Why pure quantum theory is not enough, \href{http://arxiv.org/abs/arXiv:0903.4657}{arXiv:0903.4657} (2009)

\bibitem{overlap} Schmelzer, I.: Overlaps in pilot wave field theories, \href{http://arxiv.org/abs/arXiv:0904.0764}{arXiv:0904.0764} (2009)

\bibitem{Valentini} A. Valentini, PhD thesis, International School for Advanced Studies, Trieste, Italy (1992) \href{http://www.sissa.it/ap/PhD/Theses/valentini.pdf}{www.sissa.it/ap/PhD/Theses/valentini.pdf}

\end{thebibliography}
\end{document}